\newtheorem{theorem}{Theorem}
\newtheorem{definition}{Definition}
\newtheorem{proposition}{Proposition}
\newtheorem{corollary}{Corollary}
\newtheorem{example}{Example}
\begin{document}

\title{A Singleton Bound for Generalized Ferrers Diagram Rank Metric Codes}

\author{Srikanth B. Pai and B. Sundar Rajan\\
Dept. of ECE, Indian Institute of Science,\\
Bangalore 560012, India\\
Email:\{spai,bsrajan\}@ece.iisc.ernet.in}
\maketitle

\begin{abstract}
In this paper, we will employ the technique used in the proof of classical Singleton bound to derive upper bounds for rank metric codes and Ferrers diagram rank metric codes. These upper bounds yield the rank distance Singleton bound and an upper bound presented by Etzion and Silberstein respectively. Also we introduce {\em generalized Ferrers diagram rank metric code} which is a Ferrers diagram rank metric code where the underlying rank metric code is not necessarily linear. A new Singleton bound for generalized Ferrers diagram rank metric code is obtained using our technique.

\emph{Key words and phrases}: Classical error correcting codes, rank metric codes, subspace codes, Ferrers diagrams, Singleton bound.
\end{abstract}

\section{Introduction}

Singleton Bound for classical error correcting codes was introduced in \cite{Sing}. Since then the proof technique has been carried over to various other forms of coding. A Singleton bound for rank metric codes appears in \cite{Roth} and \cite{Gab} and a quantum Singleton bound appears in \cite{KniLaf}. With the advent of random network coding, a model for error correction was introduced in \cite{KoeKschi}. In this model, subspaces are transmitted and received over a network. A collection of subspaces used for transmission is called a {\em subspace code}. A Singleton bound for constant dimension subspace codes is also derived in \cite{KoeKschi}. Codes that achieve the equality of Singleton bound for classical error correcting codes are called maximum distance separable (MDS) codes and quantum codes that achieve the equality of quantum Singleton bound are called quantum MDS codes. Similarly, codes that achieve the equality of the Singleton bound inequality of rank metric codes are called maximum rank distance (MRD) codes. It is known from \cite{EtzVar} that the equality of the Singleton bound inequality for constant dimension subspace codes cannot be achieved. A partial generalization of the Singleton bound technique was carried out in \cite{SriBSR}. A lattice scheme is defined as a subset of a partially ordered lattice. A Singleton bound for schemes in lattices was proposed in that paper. It was shown that Singleton bound for classical binary codes and subspace codes are special cases of Singleton bound for lattices. 

A Ferrers diagram rank metric code is a subspace code used for the purposes of error correction in RNC.  In \cite{EtzSilb2009}, the authors Etzion and Silberstein introduced the idea of constructing subspace codes via rank metric codes and a combinatorial object called {\em Ferrers diagrams}. They construct subspace codes using this technique and they call these codes {\em Ferrers Diagram Rank Metric Codes}. Their construction procedure involves two steps. In the first step, they choose a classical binary code with minimum distance not less than $d$ and in the second step, they choose a MRD code of a required dimension. We call such a rank metric code as the {\em underlying rank metric code} in this paper. Then they combine these two to form a constant dimension subspace code. They also specify a way to modify these constructions to obtain non-constant dimension codes. They prove an upper bound on the size of a Ferrers diagram rank metric code as well.

In \cite{SriBSR}, Singleton bound for rank metric codes and quantum codes have not been shown to be special cases of Singleton bound for lattices. In this paper, we will demonstrate a technique for proving Singleton bound that proves Singleton bounds of classical algebraic codes, rank metric codes and the upper bound presented in \cite{EtzSilb2009} for Ferrers diagram rank metric codes. The common technique involves deleting dimensions (co-ordinates, rows, columns e.t.c) of codewords until all codewords remain distinct. And then counting the total number of codewords in the deleted space. This technique produces non-linear versions of Singleton bounds for various cases. Non-linear versions of Singleton bounds for classical algebraic codes and rank metric codes already exist in the literature, \cite{Sing} and \cite{Gab} respectively. We prove a non-linear version of the Singleton bound for Ferrers diagram rank metric code in this paper.

The contributions of this paper are as follows:
\begin{enumerate}
\item Non-linear versions of the Singleton bounds for classical algebraic codes, rank metric codes and Ferrers diagram rank metric codes can be derived by a common technique.
\item We generalize Ferrers diagram rank metric codes to include codes where the underlying rank metric code is non-linear.
\item The non-linear version of Singleton bound for Ferrers diagram rank metric codes is a new upper bound. 
\end{enumerate}

The paper is organized as follows: Section \ref{sec_Bac} introduces generalized rank metric codes and generalized Ferrers diagram rank metric codes. Short introductions to Ferrers diagram and row reduced echelon forms are also given in the same section. Section \ref{sec_SB} proves the non-linear versions of the Singleton bounds for classical error correcting codes, rank metric codes and Ferrers diagram rank metric codes using a common technique. Finally, in Section \ref{sec_conc}, we conclude by summarizing our contributions and discussing the scope for future work.

{\it Notations:} A set is denoted by a capital letter and its elements will be denoted by small letters (For example, $ x \in X $). All the sets considered in this paper will be finite. Given a set $X$, $|X|$ denotes the number of elements in the set. $\mathbb{F}_q$ denotes the finite field with $q$ elements where $q$ is a power of a prime number. $\mathbb{F}_q^n$ denotes the $n$ dimensional vector space of $n$-tuples over $\mathbb{F}_q$. $M_{m \times n}(q)$ denotes the set of all $m \times n$ matrices with entries from $\mathbb{F}_q$. $\lfloor x\rfloor$ stands for the greatest integer function of $x$.

\section{Background}
\label{sec_Bac}
Instead of saying {\em non-linear versions of Singleton bounds}, for the rest of the paper we will derive Singleton bounds for generalized versions of linear codes. In other words, a code is linear by default. When the adjective {\em generalized} is attached to a code, it means that the code need not be linear. A generalized classical $q$-nary code is a subset of $\mathbb{F}_q^n$ and a classical $q$-nary code is a subspace of $\mathbb{F}_q^n$. In this section, we introduce generalized rank metric codes and generalized Ferrers diagram rank metric codes.

\subsection{Generalized Rank Metric Codes}

A {\em generalized rank metric code} $C$ is defined as a subset of $M_{m \times n}(q)$ equipped with a metric, called the \emph{Rank distance}, $d_R$. Given two elements $A, B \in C$, \[d_R(A,B) := \text{rank}(A-B).\] An element of a generalized rank metric code is called its \emph{codeword}. The \emph{minimum distance} of a generalized rank metric code is defined as the minimum possible distance between two different codewords of the code. If $C$ is a subspace of $M_{m \times n}(q)$, then such a code is called a \emph{rank metric code}. 

In the literature, rank metric codes are also called ``linear array codes'' \cite{Roth}. The following observation captures the connection between error correction capability of a rank metric code and its minimum distance.

\begin{proposition}\cite[Section 1]{Roth}
If $d$ is the minimum distance of a generalized rank metric code $C$, then $C$ can correct $t=\lfloor \frac{d-1}{2} \rfloor$ or fewer errors, and conversely.
\end{proposition}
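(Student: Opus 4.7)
The proposition is a standard metric-space result once one notes that rank distance is a genuine metric on $M_{m\times n}(q)$; the plan is to verify this and then run the usual nearest-codeword argument in both directions.

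First I would record that $d_R$ satisfies the triangle inequality: for any matrices $A,B,R$, $\operatorname{rank}(A-B)=\operatorname{rank}((A-R)+(R-B))\le \operatorname{rank}(A-R)+\operatorname{rank}(R-B)$, which is the standard subadditivity of rank. Symmetry and positive-definiteness are immediate since $\operatorname{rank}(A-B)=\operatorname{rank}(B-A)$ and vanishes iff $A=B$. So $(M_{m\times n}(q),d_R)$ is a metric space and $C$ sits inside it.

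For the forward direction (minimum distance $d$ implies $t=\lfloor(d-1)/2\rfloor$ errors are correctable) I would use nearest-codeword decoding. Suppose codeword $A\in C$ is transmitted and the received matrix $R$ satisfies $d_R(A,R)\le t$. For any other $A'\in C$, the triangle inequality gives
\[
d_R(A',R)\ge d_R(A,A')-d_R(A,R)\ge d-t.
\]
Since $t\le\lfloor(d-1)/2\rfloor$ implies $2t\le d-1<d$, we have $d-t>t\ge d_R(A,R)$, so $A$ is strictly closer to $R$ than any other codeword, and the decoder outputs $A$ correctly.

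The converse is the step I would expect to be the main obstacle, because it uses a structural property of the rank metric rather than just the triangle inequality. I have to show that if the code corrects $t$ errors then $d\ge 2t+1$. Equivalently, if two distinct codewords $A,B\in C$ satisfy $d_R(A,B)=r\le 2t$, I must exhibit a received matrix $R$ with $d_R(A,R)\le t$ and $d_R(B,R)\le t$, producing a decoding ambiguity. The key lemma is the splitting property: any matrix $M$ of rank $r$ can be written $M=M_1+M_2$ with $\operatorname{rank}(M_1)=\lceil r/2\rceil$ and $\operatorname{rank}(M_2)=\lfloor r/2\rfloor$, obtained by decomposing $M$ as a sum of $r$ rank-one matrices (via a rank factorization $M=UV$) and grouping them. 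Applying this to $M=A-B$ and letting $R:=A-M_1=B+M_2$ yields
\[
d_R(A,R)=\operatorname{rank}(M_1)\le\lceil r/2\rceil\le t,\qquad d_R(B,R)=\operatorname{rank}(M_2)\le\lfloor r/2\rfloor\le t,
\]
so the nearest-codeword rule cannot distinguish $A$ from $B$, contradicting $t$-error correctability. Therefore $d\ge 2t+1$, i.e., $t\le\lfloor(d-1)/2\rfloor$, which completes the proof.
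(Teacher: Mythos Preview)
Your argument is correct and complete: the triangle inequality for rank gives the forward direction via nearest-codeword decoding, and the rank-one decomposition of $A-B$ gives the splitting needed for the converse. Note, however, that the paper does not actually prove this proposition; it is quoted from \cite{Roth} and stated without proof, so there is no in-paper argument to compare your approach against. Your proof is the standard one for this result and would serve as a valid justification.
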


A rank metric code $C \subseteq M_{m \times n}(q)$ with minimum distance $d$ is called a $[m \times n, \dim(C), d]$ code. We shall call a generalized rank metric code $C \subseteq M_{m \times n}(q)$ with minimum distance $d$ as a $(m \times n, |C|, d)$ code. 

The following example constructs a simple rank metric code.

\begin{example}
Consider the following generalized rank metric code $C$,
\begin{align*}C = \left\{ \left( \begin{array}{cc}
	1 & 1 \\
	1 & 1
\end{array} \right), \left( \begin{array}{cc}
	1 & 0 \\
	0 & 1
\end{array} \right) \right\}.
\end{align*}
The rank distance between the two codewords is two units and $C$ is subset of $M_{2 \times 2}(2)$. Therefore $C$ is an example of a $(2 \times 2, 2, 2)$ generalized rank metric code. It is not a rank metric code since $C$ is not a subspace of $M_{2 \times 2}(2)$.
\end{example}

Next we present a rank metric code constructed by Gabidulin in \cite{Gab}.

\begin{example}
Let $\alpha$ be a root of the irreducible polynomial $f(x) = x^3 + x + 1$, where $f(X) \in \mathbb{F}_2[X]$. The extension field ${\mathbb{F}_{2^3}}$ is the smallest field that contains both $\mathbb{F}_2$ and the element $\alpha$. The subset $\{1,\alpha,\alpha^2\}$ is linearly independent over the base field $\mathbb{F}_2$. Now define a code $C$ as the set of all vectors $x \in {\mathbb{F}_{2^3}}^3$ such that $xH^T = 0$ where 
\begin{align*}
H =  \left( \begin{array}{ccc}
	1 & \alpha ^2	& \alpha
\end{array} \right).
\end{align*}
Therefore $C$ is the subspace spanned by the linearly independent set of column vectors $\{(1,\alpha,1), (0,1,\alpha)\}$. Now if we pick a basis $\{1, \alpha, \alpha^2\}$ for the extension field ${\mathbb{F}_{2^3}}$ over $\mathbb{F}_2$, we can write each co-ordinate of every vector in $C$ as a vector from ${\mathbb{F}_{2}}^3$. For example,

\begin{align*}
\left[\begin{array}{c} 1 \\ \alpha \\ 1 \end{array}\right] = \left( \begin{array}{ccc}
	1 & 0 & 0 \\
	0 & 1 & 0 \\
	1 & 0 & 0 
\end{array} \right),
\end{align*} 

\begin{align*}
\left[\begin{array}{c} 1 \\ \alpha + 1 \\ \alpha + 1 \end{array}\right] = \left( \begin{array}{ccc}
	1 & 0 & 0 \\
	1 & 1 & 0 \\
	1 & 1 & 0
\end{array} \right).
\end{align*} 

It can be verified that we have a $[3 \times 3, 2, 2]$ rank metric code. We note that $n-k = d-1$ for this example where $n=3$, $k=2$, and $d=2$. This is an example of an MRD code. A general construction is presented in \cite{Gab}. 
\end{example}

\subsection{Ferrers diagrams and Reduced Row Echelon Forms} 


Ferrers diagrams are representations of partitions of natural numbers. They are used in combinatorics as a tool to derive certain results about recursions, and generating functions that relate to partitions. In this section, we will introduce Ferrers diagrams and its connections to row reduced echelon forms of matrices. Before we introduce Ferrers diagrams, we need the definition of a \emph{partition} of a natural number.

\begin{definition}
A \emph{partition} of $n$ is a representation of $n$ as an unordered sum of positive integers. Each summand in the partition is called a \emph{part}.
\end{definition}

\begin{example}
The representation $5=1+1+3$ is an example of a partition of for the natural number $5$. Note that $5=1+3+1$ is not considered as a different partition of $5$ because both representations contain the same positive integers in a different order. $3,1,$ and $1$ are parts of the partition $5=1+1+3$.
\end{example}

Traditionally the parts of a partition are listed in a decreasing order. For example, the partitions of $4$ are represented as:
\[4=3+1=2+2=2+1+1=1+1+1+1.\]

\begin{figure}
 \centering
 \includegraphics[scale=0.4, trim=90 120 50 10, clip=true]{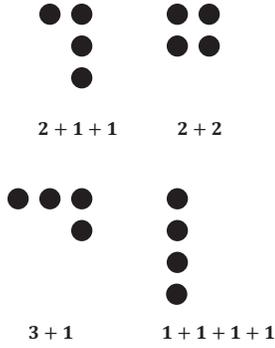}
 \caption{All the Ferrers diagrams of number four.}
 \label{fig_four_FD}
\end{figure}

The following formal definition of Ferrers diagram is explained below, adapted from \cite{EtzSilb2009}:
\begin{definition}
A \emph{Ferrers diagram} is a pattern of dots with the $i$-th row having the same number of dots as the $i$th
term in the partition. A Ferrers diagram satisfies the following conditions:
\begin{enumerate}
\item The number of dots in a row is less than or equal to the number of
dots in the previous row.
\item All the dots are aligned to the right of the diagram.
\end{enumerate}
\end{definition}

For example, the different Ferrers diagrams of the number $4$ has been shown in Fig. \ref{fig_four_FD}. The number of rows and columns of a Ferrers diagram $F$ is the maximum number of dots among all columns and all rows of $F$ respectively. An $m \times η$ Ferrers diagram is a Ferrers diagram with $m$ rows and $n$ columns. If we think of the Ferrers diagram $F$ as a matrix and transpose the diagram across the secondary diagonal, we get a diagram that is called the conjugate of $F$. Note that it may represent another partition. Due to the nature of a matrix transpose, a $m \times η$ Ferrers diagram gives rise to a conjugate that is a $η \times m$ Ferrers diagram. An example of a Ferrers diagram and its conjugate is shown in Fig. \ref{fig_conjFerr}.

\begin{figure}
 \centering
 \includegraphics[scale=0.4, trim=90 120 50 10, clip=true]{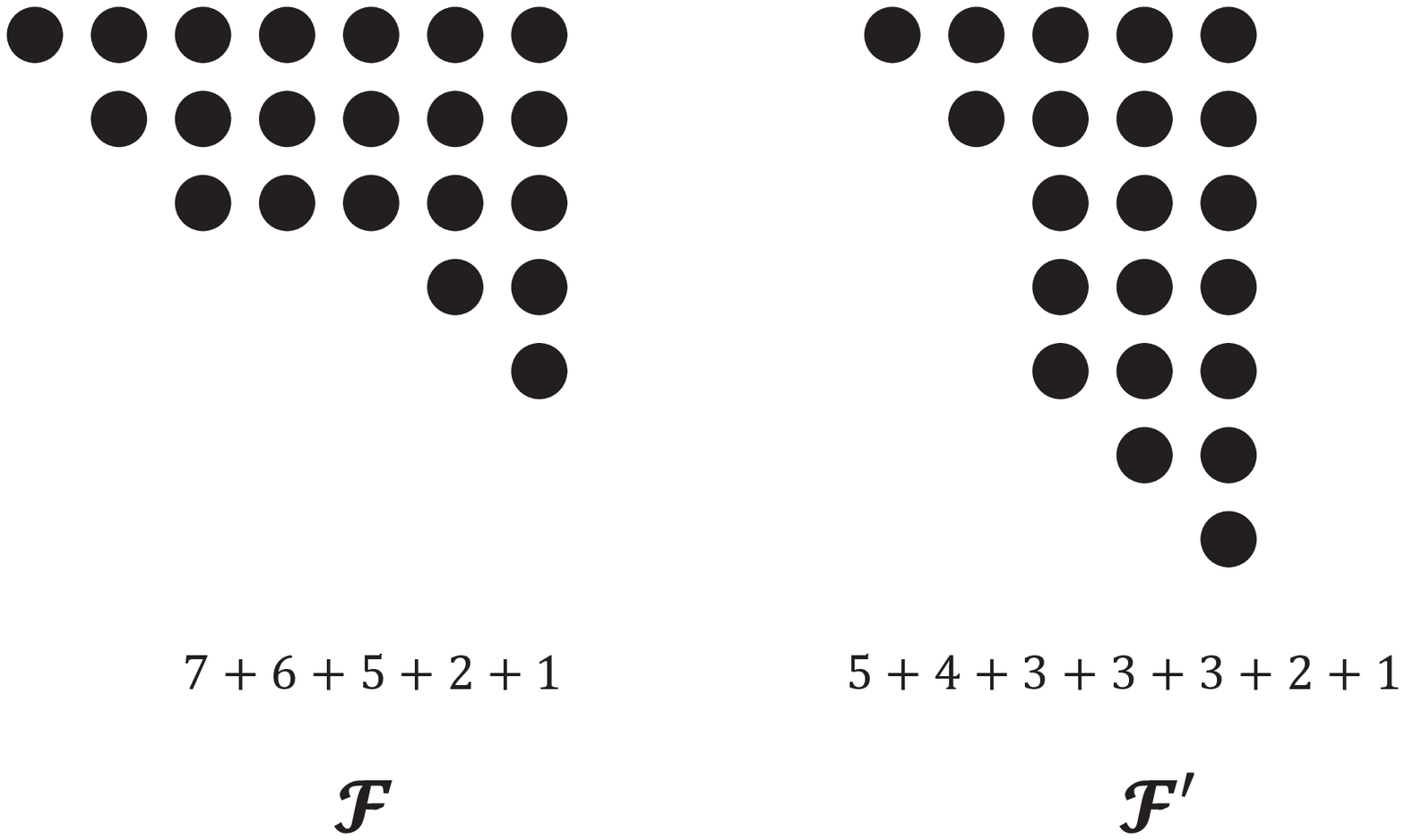}
 \caption{${\cal F}'$ is the conjugate of $\cal F$.}
 \label{fig_conjFerr}
\end{figure}

Now we will define the notion of a row reduced echelon form for a matrix from \cite{EtzSilb2009}. The notion of a row reduced echelon form is defined as follows:

\begin{definition}
A matrix is in row reduced echelon form (RREF) if
\begin{enumerate}
    \item The first nonzero number from the left of a nonzero row, called the leading coefficient of that row, is always strictly to the right of the leading coefficient of the row above it. 
		\item The leading coefficient of every row is always one.
		\item The leading coefficient is the only nonzero entry in its column.
		\item Row of all-zeroes is not written at all.
\end{enumerate}
\end{definition}

The last condition in the above definition is not standard. We note that if a RREF matrix is a $k \times n$ matrix, then the rank of the matrix is $k$. 

Subspace codes are described by listing subspaces and one possible way to list subspaces is by listing a basis of those subspaces. If we construct a basis for a given subspace, we can use Gaussian elimination, a sequence of elementary transformations, to make it a RREF matrix \cite{LintWils}. In other words, every subspace of $\mathbb{F}_q^n$ is a row-space of some RREF matrix. It is well known that a RREF is unique for a given subspace of $\mathbb{F}_q^n$. Therefore, given a subspace $W$ of $\mathbb{F}_q^n$, let $\text{R}(W)$ denote the RREF of $W$. A RREF matrix without the condition that the leading coefficient of each row is $1$ is simply called a \emph{row echelon form}. RREF of matrices are often used in linear algebra to solve a system of linear equations. The following example constructs the RREF of a $3$-dimensional subspace in a $7$ dimensional binary space.

\begin{example}
\label{ex_rref}
Consider the subspace $W$ of $\mathbb{F}_2^7$ consisting of the following vectors:
\begin{align*}
(1 \quad 0 \quad 1 \quad 0 \quad 0 \quad 1 \quad 1), \\
(1 \quad 0 \quad 0 \quad 0 \quad 1 \quad 1 \quad 0), \\
(0 \quad 0 \quad 0 \quad 0 \quad 1 \quad 1 \quad 1), \\
(0 \quad 0 \quad 1 \quad 0 \quad 1 \quad 0 \quad 1), \\
(1 \quad 0 \quad 0 \quad 0 \quad 0 \quad 0 \quad 1), \\
(1 \quad 0 \quad 1 \quad 0 \quad 1 \quad 0 \quad 0), \\
(0 \quad 0 \quad 1 \quad 0 \quad 0 \quad 1 \quad 0), \\
(0 \quad 0 \quad 0 \quad 0 \quad 0 \quad 0 \quad 0). 
\end{align*}

Clearly, $X$ is a three dimensional binary space. We can pick a basis by picking three linearly independent vectors. The first row can be picked in $7$ ways. The second can be picked in $6$ ways. Now we cannot pick the sum of two rows as the third row. Therefore, we have $4$ ways of picking the last row. Note that the choices of each row are independent of the other rows. Therefore, there are $168$ different $3 \times 7$ matrices whose row-space is $X$. But the unique RREF of $X$, among $168$ different matrices whose rows span $X$, is given by the following matrix:

\begin{align*}\text{R}(X) = \left( \begin{array}{ccccccc}
	1 & 0 & 0 & 0 & 0 & 0 & 1 \\
	0 & 0 & 1 & 0 & 0 & 1 & 0 \\
	0 & 0 & 0 & 0 & 1 & 1 & 1
\end{array} \right)
\end{align*}
\end{example}

For every $k$-dimensional subspace $X$ of $\mathbb{F}_q^n$, we can define a $n$-length binary vector $v(X)$, called the \emph{identifying vector of $X$} in \cite{EtzSilb2009}, where the ones in $v(X)$ are in the positions (columns) where $\text{R}(X)$ has the leading ones. 

\begin{example}
For the $X$ introduced in Example \ref{ex_rref}, $v(X)$ is $(1,0,1,0,1,0,0)$. 
\end{example}

The identifying vector of a $k \times n$ RREF matrix is a binary vector of Hamming weight $k$. Also, note that there can be multiple subspaces which have the same identifying vector. Given a binary vector $v$ of length $n$ and weight $k$, the echelon Ferrers form of $v$ is denoted by $EF(v)$ is the $k \times n$ RREF matrix with leading entries (of rows) in the columns indexed by the nonzero entries of $v$ and `` $\bullet$ '' in all entries which do not have terminals zeroes or ones. A `` $\bullet$ '' is termed \emph{a dot}. This notation is also given in \cite{LintWils}. The dots of this matrix form the Ferrers diagram of $EF(v)$. If we substitute elements of $\mathbb{F}_q$ in the dots of $EF(v)$ we obtain a $k$-dimensional subspace $X$ of $P_q(n)$. The form $EF(v)$ will be called also the echelon Ferrers form of $X$.

\begin{example}
Let $v = (1,1,0,0,1,0,0)$, then the echelon Ferrers form $EF(v)$ is a $3 \times 7$ matrix:
\begin{align*}\text{R}(X) = \left[ \begin{array}{ccccccc}
	1 & 0 & \bullet & \bullet & 0 & \bullet & \bullet \\
	0 & 1 & \bullet & \bullet & 0 & \bullet & \bullet \\
	0 & 0 & 0 & 0 & 1 & \bullet & \bullet
\end{array} \right]
\end{align*}

The Ferrers diagram associated with $EF(v)$ is given by the following $3 \times 4$ array:

\begin{align*}
{\cal F} = \begin{array}{cccc}
	\bullet & \bullet & \bullet & \bullet \\
  \bullet & \bullet & \bullet & \bullet \\
	  &   & \bullet & \bullet
\end{array}
\end{align*}
\end{example}

Given a subspace $X$, when the Ferrers diagram associated with $EF(v(X))$ is filled with the entries in the original matrix corresponding to the entries in the locations of the $\bullet$, we say that that it is the \emph{matrix associated with Ferrers diagram of $R(X)$}.
%

\begin{example}
Consider the RREF of a subspace $X$,
\begin{align*}\text{R}(X) = \left( \begin{array}{ccccccc}
	1 & 0 & 0 & 0 & 0 & 0 & 1 \\
	0 & 0 & 1 & 0 & 0 & 1 & 0 \\
	0 & 0 & 0 & 0 & 1 & 1 & 1
\end{array} \right)
\end{align*}

The identifying vector of $R(X)$ is $(1,0,1,0,1,0,0)$ and the associated Ferrers diagram is 
\begin{align*}
{\cal F} = \begin{array}{cccc}
	\bullet & \bullet & \bullet & \bullet \\
          & \bullet & \bullet & \bullet \\
	              &   & \bullet & \bullet
\end{array}
\end{align*}

The matrix associated with the Ferrers diagram of $R(X)$ is 
\begin{align*}
A = \begin{array}{cccc}
	0 & 0 & 0 & 1 \\
    & 0 & 1 & 0 \\
	  &   & 1 & 1
\end{array}
\end{align*}

\end{example}

Now we are in a position to define a generalized Ferrers diagram rank metric code. Given a binary vector $v$ of length $n$ and weight $k$, let $EF(v)$ be the echelon Ferrers form of $v$. Let $F$ be the Ferrers diagram of $EF(v)$. Then $F$ is an $a \times b$ Ferrers diagram, where $a \leq k, b \leq n - k$. A code $C$ is an $[F, p̺, d]$ {\em Ferrers diagram rank-metric code} if all codewords are $a \times b$ matrices in which all entries not in $F$ are zeroes and $C$ is also a $[a \times b, p,d]$ rank metric code. We call the rank metric code as the rank metric code associated with the Ferrers diagram rank-metric code. A \emph{generalized Ferrers diagram rank metric code} is one where the associated array code need not be linear. We represent such a code as $(F, M, d)$ code where $M$ denotes the number of codewords, $F$ is a Ferrers diagram and $d$ is the minimum rank distance. 


\section{Singleton Bounds}
\label{sec_SB}
In this section, we will present the Singleton for generalized classical $q$-nary codes and use the proof technique for generalized rank metric codes and generalized Ferrers diagram rank metric codes. 

\subsection{Classical Singleton Bound}
The proofs of rest of the Singleton bounds is based on the proof of the classical Singleton bound originally given in \cite{Sing}. The following theorem and proof is well known and is available in introductory text books. It is repeated here for the sake of completion and the proof technique will be used repeatedly subsequently. 

\begin{theorem}\cite{Sing}
If $C$ is binary code in $\mathbb{F}_q^n$ with minimum distance $d$, then \[|C| \leq q^{n-d+1}.\]
\end{theorem}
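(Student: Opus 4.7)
The plan is to carry out the deletion technique announced in the introduction: take each codeword in $C$, project it onto a carefully chosen set of $n-d+1$ coordinates, and argue that the resulting vectors are still pairwise distinct. Since these shortened vectors live in $\mathbb{F}_q^{n-d+1}$, an ambient space of size $q^{n-d+1}$, the bound follows immediately.

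Concretely, I would first pick any subset $S \subseteq \{1,2,\ldots,n\}$ of size $d-1$ (for definiteness, take $S = \{n-d+2, \ldots, n\}$) and consider the projection map $\pi : \mathbb{F}_q^n \to \mathbb{F}_q^{n-d+1}$ that deletes the coordinates indexed by $S$. I would then show that the restriction $\pi|_C$ is injective; once this is in hand, the inequality
\[
|C| = |\pi(C)| \;\le\; |\mathbb{F}_q^{n-d+1}| = q^{n-d+1}
\]
is immediate.

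The key step, and the only real content of the proof, is the injectivity claim. Suppose $c_1, c_2 \in C$ are distinct codewords with $\pi(c_1) = \pi(c_2)$. Then $c_1$ and $c_2$ agree in every coordinate outside $S$, so they can differ in at most the $|S| = d-1$ coordinates inside $S$. Consequently, the Hamming distance satisfies $d_H(c_1,c_2) \le d-1$, contradicting the hypothesis that the minimum distance of $C$ is $d$. Hence no two distinct codewords collapse under $\pi$, and the bound is established.

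I do not anticipate any genuine obstacle here; the argument is the prototype for all the subsequent Singleton-type bounds in the paper. The only points requiring a brief remark are the trivial boundary cases: if $d=1$ the claim reduces to $|C| \le q^n$, which is obvious, and if $d > n$ the projection would delete more coordinates than exist, but in that regime no code of minimum distance $d$ can exist in $\mathbb{F}_q^n$ at all, so the statement is vacuous. I also note the apparent typo in the theorem statement (``binary code in $\mathbb{F}_q^n$''); the proof as sketched works verbatim for arbitrary prime-power $q$, which is clearly the intended generality.
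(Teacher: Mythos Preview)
Your proposal is correct and matches the paper's proof essentially line for line: delete $d-1$ coordinates, argue by contradiction that no two codewords coincide after deletion (else their Hamming distance was at most $d-1$), and bound the size by the ambient space $\mathbb{F}_q^{n-d+1}$. The only differences are cosmetic---you name the projection $\pi$ and the deleted index set $S$ explicitly, and you note the boundary cases and the ``binary'' typo, none of which appear in the paper's terse version.
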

\begin{proof}
We modify the code by deleting $d-1$ co-ordinates from all the codewords. Since the minimum Hamming distance was $d$, if two such modified codewords are now equal, then they differed in at most $d-1$ places. This would imply that the minimum Hamming distance is strictly less than $d$ which is a contradiction. Therefore $|C|$ remains unchanged even after deletion. But the modified code has only $n-d+1$ co-ordinates. And the maximum number of $q$-nary vectors of length $n-d+1$ is $q^{n-d+1}$ and thus $|C| \leq q^{n-d+1}.$
\end{proof}


\subsection{Rank Metric Singleton Bound}

We shall now derive the Singleton bound for generalized rank metric codes.

\begin{theorem}\cite{Gab}
\label{thm_rdSB}
If $C$ is a code in $M_{m \times n}(q)$ and the minimum rank distance of the code is $d$, then \[|C| \leq q^{\text{min}\{m(n-d+1),n(m-d+1)\}}.\]
\end{theorem}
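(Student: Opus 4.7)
The plan is to mirror the classical Singleton bound proof almost verbatim, replacing the operation ``delete a coordinate'' by ``delete a column'' (or dually, ``delete a row'') of the matrix codewords, and then exploit the simple rank-perturbation fact that striking one column from an $m\times n$ matrix decreases its rank by at most $1$.

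First I would establish the one-column deletion lemma. For any matrix $A\in M_{m\times n}(q)$ and any column index $j$, the matrix $A'$ obtained by removing the $j$-th column of $A$ satisfies $\operatorname{rank}(A')\ge \operatorname{rank}(A)-1$, since the column space loses at most one dimension. Iterating, if we fix any set $S$ of $d-1$ column indices and let $\pi_S\colon M_{m\times n}(q)\to M_{m\times(n-d+1)}(q)$ be the map that deletes those columns, then $\operatorname{rank}(\pi_S(A))\ge \operatorname{rank}(A)-(d-1)$ for every $A$.

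Next I would apply this to the code. Take two distinct codewords $A,B\in C$; by hypothesis $d_R(A,B)=\operatorname{rank}(A-B)\ge d$. Because $\pi_S$ is linear, $\pi_S(A)-\pi_S(B)=\pi_S(A-B)$, and by the lemma this difference has rank at least $d-(d-1)=1$, hence is nonzero. Thus $\pi_S$ is injective on $C$, so $|C|=|\pi_S(C)|\le |M_{m\times(n-d+1)}(q)|=q^{m(n-d+1)}$.

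Finally I would note that the argument is symmetric in rows and columns: deleting $d-1$ rows (equivalently, applying the column-deletion argument to the transposes $A^{T}$, which form a generalized rank metric code in $M_{n\times m}(q)$ with the same minimum rank distance) yields $|C|\le q^{n(m-d+1)}$. Taking the minimum of the two bounds gives the claimed inequality. There is no real obstacle here; the only thing to check carefully is the rank-drop inequality under column deletion, which is why I would isolate it as the first step.
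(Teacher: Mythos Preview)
Your proposal is correct and follows essentially the same approach as the paper: delete $d-1$ columns (respectively rows), argue that the puncturing map is injective on $C$ because otherwise two codewords would differ only on the deleted columns and hence have rank distance at most $d-1$, and then count. The only cosmetic difference is that you phrase the key step via the rank-drop inequality $\operatorname{rank}(\pi_S(A-B))\ge \operatorname{rank}(A-B)-(d-1)$, whereas the paper phrases it contrapositively (if two punctured codewords coincide, the difference is supported on $d-1$ columns and so has rank $<d$); these are equivalent formulations of the same observation.
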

\begin{proof}
We delete $d-1$ columns from the all the codewords in $C$ (See Figure \ref{fig_rk_SB}). Now, if some two codewords are equal then these two codewords differed in at most $d-1$ columns before deletion. But this would mean that the rank distance between the two codewords originally was less than $d$. This contradiction shows that all the codewords in the deleted code are distinct. We can similarly prove that the codewords are distinct if we punctured $d-1$ rows. So the deleted codewords belong to the set $M_{m \times (n-d+1)}(\mathbb{F}_q)$ or $M_{(m - d + 1) \times n}(\mathbb{F}_q).$ Therefore \[|C| \leq q^{\text{min}\{m(n-d+1),n(m-d+1)\}}.\]
\end{proof}

\begin{figure}
 \centering
 \includegraphics[scale=0.4, trim=90 120 50 10, clip=true]{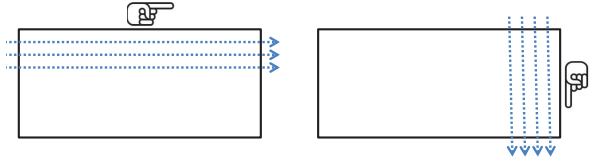}
 \caption{Strategy for generalized Rank Metric Singleton bound: Puncturing $d-1$ rows or columns does not change the size of the code. Here $d$ is the minimum distance of the code $C$.}
 \label{fig_rk_SB}
\end{figure}

Our proof has the distinction that it does not need the assumption of linearity and therefore our rank metric Singleton bound is true for generalized rank metric codes. It should be noted that the proof in \cite{Gab} is simpler than our proof. In particular, we have the following corollary for a $[m \times n, k, d]$ rank metric code.

\begin{corollary}\cite{Roth}
If $C$ is a $[m \times n, k, d]$ rank code in $M_{m \times n}(q)$, then 
\[k \leq \text{min}\{m(n-d+1),n(m-d+1)\}.\]
\end{corollary}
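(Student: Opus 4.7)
The plan is to derive this corollary as an immediate consequence of Theorem \ref{thm_rdSB}, exploiting the fact that a $[m \times n, k, d]$ rank metric code is by definition a \emph{linear} subspace of $M_{m \times n}(q)$, so its cardinality is rigidly tied to its dimension.

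First I would observe that a $[m \times n, k, d]$ rank metric code $C$ is a subspace of $M_{m \times n}(q)$ of $\mathbb{F}_q$-dimension $k$, and hence $|C| = q^k$. Since $C$ (viewed just as a subset of $M_{m \times n}(q)$ equipped with the rank distance) is in particular a generalized rank metric code with minimum rank distance $d$, Theorem \ref{thm_rdSB} applies and yields
\[
q^k = |C| \leq q^{\min\{m(n-d+1),\, n(m-d+1)\}}.
\]

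Next I would take logarithms base $q$. Since $q \geq 2$, the function $x \mapsto \log_q x$ is strictly increasing, so the inequality above is equivalent to
\[
k \leq \min\{m(n-d+1),\, n(m-d+1)\},
\]
which is the desired bound.

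There is no real obstacle here: the whole work was done in Theorem \ref{thm_rdSB}, and the corollary is just the translation of the cardinality bound into a dimension bound under the linearity hypothesis. The only thing to be careful about is to state explicitly that linearity is what allows us to replace $|C|$ by $q^k$; without linearity, one only has the coarser cardinality statement of the theorem.
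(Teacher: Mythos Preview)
Your proof is correct and is essentially identical to the paper's own argument: both note that $|C| = q^k$ by linearity, apply Theorem~\ref{thm_rdSB} to get $q^k \leq q^{\min\{m(n-d+1),\, n(m-d+1)\}}$, and then compare exponents.
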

\begin{proof}
A rank metric code of dimension $k$ has $q^k$ codewords. Applying Theorem \ref{thm_rdSB} to $C$ proves that \[q^k \leq q^{\text{min}\{m(n-d+1),n(m-d+1)\}}.\]
Therefore, we have \[k \leq \text{min}\{m(n-d+1),n(m-d+1)\}.\]
\end{proof}

\subsection{Singleton Bound for Generalized Ferrers Diagram Rank Metric Codes}

We will prove a Singleton bound for generalized Ferrers diagram rank metric code and obtain the Singleton bound for Ferrers diagram rank metric code codes, as a corollary, by specializing it for rank metric codes. The proof proceeds in a manner very similar to our proof for classical Singleton bound and Singleton bound for generalized rank metric codes. It is important to note that the proof provided in \cite{EtzSilb2009} uses the subspace structure of rank metric codes and is therefore applicable only to Ferrers diagram rank metric codes. Our proof, on the other hand, is more general since we do not use the subspace structure of the code.

\begin{figure}
 \centering
 \includegraphics[scale=0.4, trim=90 120 50 10, clip=true]{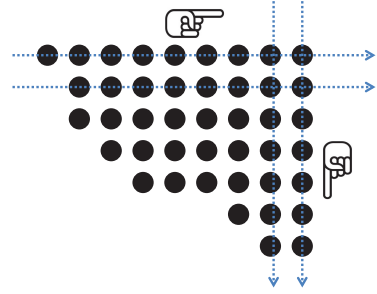}
 \caption{Deleting rows and columns in a general Ferrers diagram rank metric code.}
 \label{fig_EtzSilbSB}
\end{figure}

\begin{theorem}
\label{thm_genEtzSilbSB}
For a given Ferrers diagram $F$, and non-negative integers $d$ and $i, 0 \leq i \leq d-1$, if $v_i$ is the number of dots in $F$, which are not contained in the first $i$ rows and are not contained in the rightmost $d - 1 - i$ columns and ${\cal C}$ is a $(F, M, d)$ general Ferrers diagram rank metric code, then
\[\displaystyle M \leq q^{\min_i{v_i}}\].
\end{theorem}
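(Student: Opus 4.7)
The plan is to imitate the deletion-and-injectivity strategy used for the classical and rank-metric Singleton bounds, but now delete a mixture of rows and columns parametrized by $i$. Specifically, for each fixed $i$ with $0 \leq i \leq d-1$, I would remove the first $i$ rows and the rightmost $d-1-i$ columns from every codeword of $\mathcal{C}$, obtaining a restricted array in which the only surviving positions that can be nonzero are precisely the dots of $F$ that lie outside the first $i$ rows and outside the rightmost $d-1-i$ columns. By definition this leaves $v_i$ free positions over $\mathbb{F}_q$, so there are at most $q^{v_i}$ possible restricted arrays.

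The key step is to show that this restriction map is injective on $\mathcal{C}$. Suppose $A, B \in \mathcal{C}$ agree after deletion; then $A - B$ is supported inside the union $S$ of the first $i$ rows and the last $d-1-i$ columns. I would split $A - B = R + K$ where $R$ is the matrix agreeing with $A - B$ on the first $i$ rows and zero elsewhere, and $K$ holds the remaining entries, which necessarily lie in the last $d-1-i$ columns. Then $\operatorname{rank}(R) \leq i$ and $\operatorname{rank}(K) \leq d-1-i$, so by subadditivity
\[
d_R(A,B) = \operatorname{rank}(A-B) \leq i + (d-1-i) = d-1,
\]
contradicting the hypothesis that the minimum rank distance of $\mathcal{C}$ is $d$. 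Hence $A = B$ and the restriction is injective, giving $M \leq q^{v_i}$.

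Since this bound holds for every admissible $i$, I would conclude $M \leq q^{\min_i v_i}$, which is the claim. The only subtlety I anticipate is bookkeeping around the Ferrers diagram: one must check that deleting entire rows and columns of the ambient $a \times b$ array is well defined for a Ferrers-diagram rank-metric code (the entries outside $F$ are already zero, so they are unaffected) and that the count of surviving dots really is $v_i$ as defined in the statement. Both follow immediately from the definitions. The substantive idea is the rank decomposition $\operatorname{rank}(R+K) \leq \operatorname{rank}(R) + \operatorname{rank}(K)$, which is exactly what allows a single deletion budget of $d-1$ to be split arbitrarily between rows and columns, and which also makes the bound valid in the nonlinear (generalized) setting.
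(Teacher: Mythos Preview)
Your proposal is correct and follows essentially the same approach as the paper: delete the first $i$ rows and last $d-1-i$ columns, show injectivity by bounding the rank of the difference of two colliding codewords via a row/column split, and then minimize over $i$. Your decomposition $A-B = R+K$ with subadditivity of rank is exactly the paper's block argument (the paper writes $X-Y$ as a block matrix $\bigl(\begin{smallmatrix} A & B \\ O & C \end{smallmatrix}\bigr)$ and bounds its rank by $i + (d-1-i)$), so there is no substantive difference.
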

\begin{proof}
It suffices to prove that for all $i$ between $0$ and $d-1$: \[ M \leq q^{{v_i}}. \]

As shown in Fig. \ref{fig_EtzSilbSB} when we delete (or puncture) the first $i$ rows and the last $d - 1 - i$ columns of all the codewords in the $(F, M, d)$ general Ferrers diagram rank metric code, we shall call this a \emph{puncturing} of the general Ferrers diagram rank metric code (or simply termed `puncturing' for the reminder of the proof). Assume that $F$ is a Ferrers diagram of dimension $a \times b$.

If some two codewords $X$ and $Y$ become equal after puncturing, then those two codewords differed only in the first $i$ rows or the last $d-1-i$ columns. Fig. \ref{fig_rankcal} shows the matrix structure of the difference between $X$ and $Y$ where $A$,$B$ and $C$ are submatrices, and $O$ is an all zero matrix of appropriate dimensions. $A$ is an $i \times (b-d+i+1)$ matrix, $B$ is an $i \times (d-1-i)$ matrix and $C$ is a $(a-i) \times (d-1-i)$ matrix. The rank of $X-Y$ is at most the sum of rank of $A$ and rank of $C$. However, rank of $A$ is at most $i$ and rank of $C$ is at most $d-1-i$, therefore the rank of $X-Y$ is at most $i + (d-1-i) = d-1.$ This means that the rank distance between $X$ and $Y$ is strictly less than $d$ which contradicts the fact that the minimum rank distance of $\cal C$ is $d$. Therefore, two codewords $X$ and $Y$ cannot be equal after puncturing the first $i$ rows and the last $d - 1 - i$ columns of all the codewords. This implies that the number of codewords have not changed after the process of deletion. There are only $v_i$ locations in the matrix where two codewords can differ after puncturing, from which we can obtain at most $q^{v_i}$ matrices and thus \[ M \leq q^{v_i}.\] But this is true for every $i, 0 \leq i \leq d-1$ which proves the theorem.
\end{proof}

\begin{figure}
 \centering
 \includegraphics[scale=0.4, trim=60 50 50 10, clip=true]{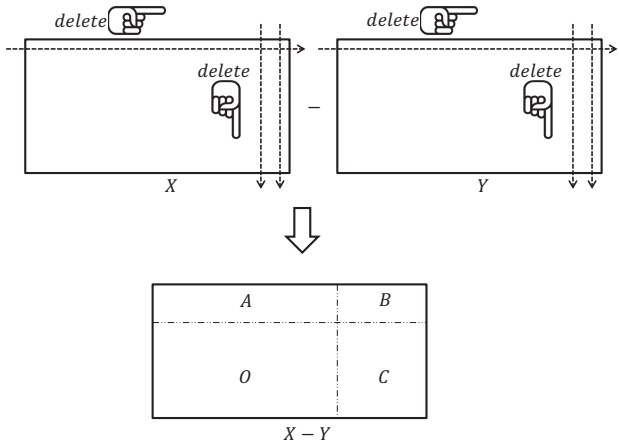}
 \caption{The structure of the difference to two codewords which are equal after deletion of rows and columns.}
 \label{fig_rankcal}
\end{figure}

Now we state and prove the Singleton bound for Ferrers diagram rank metric codes \cite{EtzSilb2009}[Thm. 1].
\begin{corollary}
\label{cor_EtzSilbSB}
For a given $i, 0 \leq i \leq d-1$, if $v_i$ is the number of dots in $F$, which are not contained in the first $i$ rows and are not contained in the rightmost $d - 1 - i$ columns then
$\displaystyle \min_i{v_i}$ is an upper bound of $\dim(F, d)$.
\end{corollary}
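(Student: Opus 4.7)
The plan is to obtain this corollary as a direct specialization of Theorem \ref{thm_genEtzSilbSB} to the linear case, mirroring how the rank metric corollary was derived from Theorem \ref{thm_rdSB}. Since every (linear) Ferrers diagram rank metric code is in particular a generalized Ferrers diagram rank metric code, the upper bound on $M$ established in the theorem must also constrain the cardinality, and hence the dimension, of the linear code.

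First, I would observe that an $[F, \rho, d]$ Ferrers diagram rank metric code with $\rho = \dim(F, d)$ is a subspace of the space of $a \times b$ matrices supported on $F$, and therefore contains exactly $q^\rho$ codewords. Viewed as a subset rather than as a subspace, it is a $(F, q^\rho, d)$ generalized Ferrers diagram rank metric code, so Theorem \ref{thm_genEtzSilbSB} applies and yields
\[
q^\rho \;\leq\; q^{\min_i v_i}.
\]
Taking logarithms base $q$ (using $q \geq 2$) gives $\rho \leq \min_i v_i$, which is precisely the claim that $\min_i v_i$ is an upper bound on $\dim(F, d)$.

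There is no real obstacle here; the work has already been done in proving Theorem \ref{thm_genEtzSilbSB}, and the only thing to verify is that the quantities $v_i$ appearing in the theorem and corollary are defined identically (they are, word for word). The presentation I would use is a two-line proof: invoke the theorem with $M = q^{\dim(F, d)}$, then take $\log_q$ of both sides.
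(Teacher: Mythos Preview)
Your proposal is correct and matches the paper's own proof essentially line for line: both note that a $[F, p, d]$ Ferrers diagram rank metric code has $M = q^{p}$ codewords, apply Theorem~\ref{thm_genEtzSilbSB} to obtain $q^{p} \leq q^{\min_i v_i}$, and conclude $p \leq \min_i v_i$.
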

\begin{proof}
The the number of codewords in a $[F, p, d]$ Ferrers diagram rank metric code is $q^p$. Therefore, applying Theorem \ref{thm_genEtzSilbSB} to the given code we obtain
\[\displaystyle q^p = M \leq q^{\min_i{v_i}}.\]
Therefore \[p \leq \min_i{v_i}.\]
\end{proof}

\section{Conclusion}
\label{sec_conc}
We used the proof technique employed in \cite{Sing} to obtain the non-linear versions of Singleton bounds for rank metric codes and Ferrers diagram rank metric codes. The non-linear version of Singleton bound for Ferrers diagram rank metric code is new.
It is not clear if this proof technique can be used to derive the quantum Singleton bound. Investigating the connections between this proof technique and Singleton bound for lattice schemes is also an interesting direction that can be pursued.

\end{document}